%
%
%
%

\documentclass[a4paper]{}
\usepackage{amsmath}

\usepackage{amssymb}
\setcounter{tocdepth}{3}
\usepackage{multirow}
\usepackage{graphicx}
\usepackage{mathtools}

\DeclarePairedDelimiter\floor{\lfloor}{\rfloor}
\usepackage{listings}
\lstset{
  basicstyle=\ttfamily,
  mathescape
}
\usepackage{chngcntr}
\usepackage{url}

\spnewtheorem{mainth}[theorem]{Main Theorem}{\bfseries}{\itshape}
\newtheorem{dfn}[theorem]{Definition}

\newcommand{\RR}{\mathbb{R}}
\newcommand{\e}{\epsilon}
\newcommand{\Pe}{P^{-\e}}
\newtheorem{cor}[theorem]{Corollary}
\newtheorem{lem}[theorem]{Lemma}
\def\RR{{\mathbb R}}

\begin{document}
\title{Algorithms for Deciding Membership in Polytopes of General Dimension}

\author{Evangelos Anagnostopoulos\inst{1}, Ioannis Z.~Emiris\inst{1}\and Vissarion Fisikopoulos\inst{2}}
\institute{National and Kapodistrian University of Athens, Athens, Greece, \email{\{aneva,emiris\}@di.uoa.gr}\\\and Oracle Corp., Greece, \email{vissarion.fysikopoulos@oracle.com}}

\maketitle

\begin{abstract}
We study the fundamental problem of polytope membership aiming at large convex polytopes, i.e.\ in high dimension and with many facets, given as an intersection of halfspaces.
Standard data-structures as well as brute force methods cannot scale, due to the curse of dimensionality. 
We design an efficient algorithm, by reduction to the approximate Nearest Neighbor (ANN) problem based on the construction of a Voronoi diagram with the polytope being one bounded cell.
We thus trade exactness for efficiency so as to obtain complexity bounds polynomial in the dimension, by exploiting recent progress in the complexity of ANN search. We employ this algorithm to present a novel boundary data structure based on a Newton-like iterative intersection procedure.
We implement our algorithms and compare with brute-force approaches to show that they scale very well as the dimension and number of facets grow larger.  

\end{abstract} 

\section{Introduction}

In geometric optimization, convex polytopes are very important objects appearing also as feasible regions in linear programming.
Let us consider a convex polytope $P$ in H-representation, that is as the intersection of a finite set of linear inequalities:
$P = \{x\in\RR^d \mid Ax\leq b,\, A\in\RR^{n \times d},\, b\in\RR^n \}$.
An important question on such a polytope is that of point membership.
We wish to preprocess $P$ in order to obtain a membership data structure which, given a query point $q$, efficiently decides whether $q$ lies inside or outside $P$. A decision can be reached by testing all $n$ inequalities for a complexity of $O(nd)$. This trivial approach is often a plausible exact solution, especially in the high-dimensional case. In order to design a more efficient algorithm in high dimension, we will focus on the approximate polytope membership problem where the membership data structure is allowed to answer incorrectly for points lying very close to the boundary of the polytope. A formal definition will be provided later in Section~\ref{sAPMO}.

Algorithms used to solve combinatorial optimization problems, such as the ellipsoid, interior point or randomized methods (for the latter see~\cite{Bertsimas04}), usually rely on randomly sampling convex polytopes. 
The inner loop of such algorithms needs access to a membership or a boundary oracle, where the latter is the procedure that computes the intersection of a ray with the boundary of the polytope and is equivalent to membership via binary search. The oracle specification means that we are not interested in how the solution is computed or of its computational complexity.
Gr\"otschel et al.~\cite{Groetschel88} proposed the oracle model of computation and among other results they prove the polynomial time equivalence of basic oracles such as optimization, separation, and membership. This has become a commonly employed tool in combinatorial optimization mainly for studying the computational complexity of problems. Another important example of application is volume approximation~\cite{Dyer91,Lovasz06} which has also an established connection to combinatorial optimization. For example, the volume of order polytopes gives the number of linear extentions of the associated partial order set. 

From a practical point of view opening the oracle black box, in particular membership, and improving their complexity, implies improvements to the applicability of the aforementioned algorithms. For example, the first implementation of randomized algorithms that scale in high dimension appeared in~\cite{EmiFis14}.
Their approach relies on the standard random walks known as hit-and-run, which require a boundary oracle.
Notice that, although this software can handle polytopes in spaces whose dimension goes up to $200$, it cannot scale as efficiently for specific classes of polytopes with a large number of facets.
In particular, it cannot approximate the volume of cross-polytopes of dimension~$20$ or more.

Here, we radically shift the aforementioned paradigm and, moreover, improve upon the complexity of membership and boundary data structures, when dimension $d$ is an input parameter. We exploit the approximate setting and allow ourselves to answer correctly within some approximation error $\epsilon$ and with some success probability.
Our new paradigm uses a reduction to the Approximate Nearest Neighbor (ANN) problem, which is the most fundamental problem among those today with a practical, poly-time solution in high-dimensions.

\paragraph{Previous Work.} 
There are two classical results for the approximate membership problem, both based on creating $\epsilon$-approximating polytopes and answering membership on them.
Any convex body is $\epsilon$-approximated by a polytope with $O(1/\epsilon^{(d-1)/2})$ facets, which is asymptotically tight in the worst case~\cite{Dudley74}.
This leads to a membership data structure with space and query complexity in $O(1/\epsilon^{(d-1)/2} )$. 
Using a $d$-dimensional grid, membership takes constant time (assuming a model of computation that supports the floor function) and space grows to $O(1/\epsilon^{d-1})$~\cite{Bentley82}.

A relevant line of work on approximate membership in fixed $d$ uses space-time trade-offs~\cite{Arya11,Arya12soda} to
achieve a space of $O(1/\epsilon^{(d-1)\left(1-(2\floor{\log t}-2)/t\right)})$ with query time $O(\log({1/\epsilon})/\epsilon^{(d-1)/t})$, for trade-off parameter $t \geq 4$.
In~\cite{Arya17}, again for fixed $d$, they opt for a hierarchy of ellipsoids selected by a sampling process on classical structures from the theory of convexity defined on the polytope. They achieve space $O(1/\epsilon^{(d-1)/2})$ with an optimal query time of $\log({1/\epsilon})$.

We present state-of-the-art approaches to ANN as we build atop of those for our oracles. There are many solutions to this problem, but in principle, methods that scale polynomially with $d$ belong to two categories.  First, the well studied Locality Sensitive Hashing (LSH)~\cite{Indyk98}. The other category focuses on random projections~\cite{Anag15}, then uses fast algorithms in fixed dimension. 
Both achieve sublinear query time with (near-)linear storage, while scaling polynomially in $d$, and both have a probability of success $p$. 

\paragraph{Our contribution.} 
We describe a simple constructive reduction from the polytope membership 
problem to ANN, then show under which conditions this 
reduction holds for the respective approximate versions of the problems. This gives 
us the flexibility to exploit advances in the research of ANN in order to offer, the first (as far as the authors are aware) practical approximate polytope membership data structure in high dimension with complexity bounds polynomial in the dimension $d$ and sublinear in the number of inequalities $n$. This is our main result, in Theorem~\ref{mainTheorem}.
We also present an application of this membership data structure for creating boundary data structures for H-polytopes. 
We implement and experimentally examine our algorithms; we illustrate that they
scale well as dimension and number of facets grow larger.  
Our implementation is linked to the software of~\cite{EmiFis14} for polytope volume, so as to provide faster oracles.

The rest of the paper is organized as follows.
The next section discusses (approximate) membership and the reduction to ANN.
Section \ref{Sbound} considers the boundary data structures.
The implementation and experiments are in Section \ref{Simplement}.
We conclude with open questions.

\section{Approximate Polytope Membership}
We assume that the given H-polytope $P$ is full dimensional and that its representation is minimal, i.e. that it does not contain redundant inequalities.

We denote the $i$-th (in)equality of $P$ as $a_i x \leq b_i, 1 \leq i \leq n$. We associate each facet of the polytope with a corresponding (in)equality and denote it as $F_i$. Formally: $ F_i = \{x \in P \mid a_i x = b_i \},\ 1 \leq i \leq n $. 
The hyperplanes that define non-empty $F_i$'s, i.e.\ for which $F_i \neq \emptyset$ are called non-redundant or supporting and we extend that label to their inequalities. We denote as $\partial P$ the boundary of $P$:
$\partial P = \{x \in P \mid \exists i, \ \ 1 \leq i \leq n \ \ \text{s.t.} \ \ x \in F_i \}$.

\subsection{Exact Polytope Membership Oracle}
A reduction from the exact polytope membership problem to the exact nearest neighbor problem was established in \cite{Aurenhammer87}, where it was shown that there is a connection between the boundaries of polytopes in $\RR^d$ and power diagrams in $\RR^{d-1}$. Power diagrams define a partition of the Euclidean space into a cell complex based on a set of spheres. Each sphere identifies a specific cell and that cell consists of all the points whose power distance is minimized for that sphere. The power diagram is a generalized Voronoi diagram, and coincides with the Voronoi diagram of the sphere centers if all spheres have equal radii. 


\begin{theorem} {\cite[Thm.4]{Aurenhammer87}}
For any polyhedron $P\in\RR^d$, which is expressible as the intersection of upper halfspaces, there exists an affinely equivalent power diagram in hyperplane $h_0: x_d=0$.
\end{theorem}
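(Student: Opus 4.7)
The plan is to make the standard lifting-projection correspondence explicit, and then read off both the power diagram and the affine equivalence from it. Writing coordinates as $x=(x',x_d)$ with $x'\in h_0\cong\RR^{d-1}$, an upper halfspace can be normalized in the form $x_d\ge a_i\cdot x'+b_i$ (assuming all bounding hyperplanes are non-vertical; the vertical case is handled separately by simply declining to produce a corresponding sphere, and checking membership in $x'$-space directly). Thus $P$ is the set of points lying on or above every hyperplane $H_i:x_d=a_i\cdot x'+b_i$, and its lower boundary is the graph of the upper envelope $E(x'):=\max_i(a_i\cdot x'+b_i)$. The vertical projection $\pi:(x',x_d)\mapsto x'$ restricted to this lower boundary is an affine bijection onto $h_0$; under it, each facet $F_i$ of $P$ goes to the cell $C_i=\{x'\in h_0 : a_i\cdot x'+b_i\ge a_j\cdot x'+b_j\ \forall j\}$.

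The key algebraic step is to realize the cell complex $\{C_i\}$ as a power diagram. For a sphere with center $c_i\in h_0$ and squared radius $\rho_i$ (allowed to be negative, as is standard for power diagrams), the power of $x'$ is
\[
\operatorname{pow}_i(x')=\|x'-c_i\|^2-\rho_i=\|x'\|^2-2c_i\cdot x'+\|c_i\|^2-\rho_i.
\]
Setting $c_i=a_i/2$ and $\rho_i=\|a_i\|^2/4+b_i$ gives $\operatorname{pow}_i(x')=\|x'\|^2-(a_i\cdot x'+b_i)$, so the term $\|x'\|^2$ is shared across all $i$ and the inequality $\operatorname{pow}_i(x')\le\operatorname{pow}_j(x')$ is literally $a_i\cdot x'+b_i\ge a_j\cdot x'+b_j$. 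Consequently the power cell of sphere $S_i$ coincides with $C_i=\pi(F_i)$.

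To conclude, I would package the construction into the affine map $\pi$ and verify the two directions: every face of $P$ of every dimension projects onto a face of the power diagram of the same dimension (facet-cell correspondence handles top-dimensional faces, and lower-dimensional faces arise exactly from intersections of facets, which correspond to equal-power loci), and conversely every face of the power diagram lifts via $x_d=E(x')$ to a face of $\partial P$. This gives a face-lattice isomorphism realized by a single affine map, which is the intended meaning of affine equivalence.

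The main obstacle I anticipate is essentially bookkeeping rather than a deep difficulty: ensuring the correspondence is not merely set-theoretic on top-dimensional cells but preserves the whole face structure, and cleanly handling degeneracies such as vertical bounding hyperplanes of $P$ (which contribute no sphere but cut $h_0$ by a hyperplane that must be added as an extra constraint to the diagram), redundant inequalities (which would produce empty power cells and can simply be discarded), and the sign of $\rho_i$ (for which one must appeal to the general definition of power diagrams allowing imaginary radii). Once these are dispatched, the identity $\operatorname{pow}_i(x')-\operatorname{pow}_j(x')=-(a_i\cdot x'+b_i)+(a_j\cdot x'+b_j)$ does the real work.
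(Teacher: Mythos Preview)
The paper does not supply its own proof of this statement: it is quoted verbatim as \cite[Thm.~4]{Aurenhammer87} and used as a black box, so there is no in-paper argument to compare against. What you have written is essentially Aurenhammer's original proof---normalize each upper halfspace as $x_d\ge a_i\cdot x'+b_i$, project the lower boundary of $P$ vertically to $h_0$, and recognize the resulting cell complex as the power diagram of the spheres with centers $c_i=a_i/2$ and weights $\rho_i=\|a_i\|^2/4+b_i$ via the cancellation of the common $\|x'\|^2$ term. Your treatment of the face lattice (passing from the facet--cell correspondence to all faces through intersections) and your list of degeneracies are the right bookkeeping items; none of them hides a genuine obstacle. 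One small remark: since the hypothesis already restricts to an intersection of \emph{upper} halfspaces, vertical bounding hyperplanes do not occur, so that side case can be dropped rather than patched.
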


\noindent A cell complex $C$ and a polyhedron $P \subset \RR^{d+1}$ are said to be affinely equivalent if there exists a central or parallel projection $\phi$ such that, for each face $f$ of $C,\ f= \phi(g)$ holds for some face $g$ of $P$. This provides a reduction from ray shooting in a polyhedron to point location in a polyhedral complex. 
In the case of polytope membership, the polyhedral complex becomes a single cell (the polytope) and the power diagram becomes a Voronoi diagram. This provides a reduction from polytope membership to Nearest neighbor.

\begin{cor}\label{Cpoints}
Let $P \subset \mathbb{R}^d$ be a convex polytope described as the intersection of $n$ non-redundant
halfspaces. For every point $p^* \in P \setminus \partial P$ it is possible to compute a set $S$ of $n+1$ points such that, $p^* \in S$ and, given a query point $q$, the exact
Polytope Membership test for a query point $q$ reduces to finding the Nearest Neighbor of $q$ among these $n+1$ points. 
\end{cor}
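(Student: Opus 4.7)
The plan is to invoke the preceding theorem of Aurenhammer in the degenerate regime where the polyhedral complex consists of a single bounded cell, so that the power diagram specializes to an ordinary Voronoi diagram (all weights equal, equivalently all radii zero). Rather than unfolding the general lifting argument, I would give a direct explicit construction of the $n+1$ sites. Pick any interior point $p^*\in P\setminus\partial P$ and declare it the ``inside'' site. For each facet-defining inequality $a_i x\le b_i$ (so that $a_i p^* < b_i$ strictly, since $p^*$ is interior), let $s_i$ be the Euclidean reflection of $p^*$ across the supporting hyperplane of $F_i$, namely
\[ s_i \;=\; p^* + 2\,\frac{b_i - a_i p^*}{\|a_i\|^2}\,a_i^{\top}. \]
Set $S=\{p^*,s_1,\dots,s_n\}$; non-redundancy of the halfspace description guarantees that no two facets share a supporting hyperplane, so the $n+1$ sites are pairwise distinct and $|S|=n+1$.

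The key verification is that the Voronoi cell $V(p^*)$ of $p^*$ with respect to $S$ is exactly $P$. By the defining property of a reflection, the perpendicular bisector of the segment $\overline{p^*s_i}$ coincides with the hyperplane $a_i x=b_i$; and because $a_i p^* < b_i$, the closed half of that bisector containing $p^*$ is precisely $\{x:a_i x\le b_i\}$. Intersecting the $n$ Voronoi halfspace constraints that share $p^*$ therefore yields
\[ V(p^*) \;=\; \bigcap_{i=1}^{n}\bigl\{x:\|x-p^*\|\le\|x-s_i\|\bigr\} \;=\; \bigcap_{i=1}^{n}\bigl\{x:a_i x\le b_i\bigr\} \;=\; P. \]

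The reduction is then immediate: a query point $q$ has its nearest neighbor in $S$ equal to $p^*$ if and only if $q\in V(p^*)=P$, so exact polytope membership reduces to a single exact $1$-nearest-neighbor query on the $n+1$ sites. The only subtle point---and the closest thing to an obstacle---is ruling out that bisectors \emph{among} the auxiliary sites $s_i$ might carve $V(p^*)$ into something strictly smaller than $P$; but by the very definition of a Voronoi cell, only bisectors incident to the site $p^*$ appear in the halfspace description of $V(p^*)$, and those were just shown to reproduce exactly the $n$ facet halfspaces of $P$. This matches the single-cell specialization promised by Aurenhammer's theorem and completes the proof sketch.
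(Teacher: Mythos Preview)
Your proof is correct and follows essentially the same approach as the paper: both constructions take, for each facet hyperplane $H_i$, the reflection of the interior point $p^*$ across $H_i$ and then observe that the Voronoi cell of $p^*$ in the resulting $(n+1)$-point set equals $P$. Your write-up is slightly more explicit (giving the closed-form reflection formula, verifying distinctness of the sites, and addressing the ``other bisectors'' concern), but the underlying idea and construction coincide with the paper's.
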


\begin{proof}

We initialize $S = \{p^*\}$.  
We will describe for completeness the procedure to compute the remaining $n$ points of $S$ such that the corresponding Voronoi diagram of these $n$ points and $p^*$ will have the polytope $P$ as the voronoi cell of $p^*$. These $n+1$ points will be the points of the corollary.

For each facet $F_i$ and its corresponding hyperplane $H_i := a_i x = b_i,\, 1 \leq i \leq n$, we compute the projection of $p^*$ on $H_i$ and denote it as $f_i$.  
Then, we compute the point $p_i,\, 1\le i\le n$, such that the line segment $(p^*,p)$ is perpendicular to $H_i$ and $d(p^*, H_i) = ||p^* - f_i||_2 = d(p_i, H_i)$, where $d(p, S) = \min\limits_{x \in S} ||p-x||_2$.
Equivalently, 
$
p_i = f_i + (f_i - p^*)
$.

We now have a set of points $S = \{p^*, p_1, \ldots, p_n\}$ of $n+1$ points that have the following property.
In the Voronoi diagram of $S$, by construction, the cell that corresponds to $p^*$ is 
precisely the input polytope $P$.  
By the Voronoi property, the following holds:
$
q \in P \Leftrightarrow ||p^*- q||_2 \leq ||q - s||_2, \; \forall s \in S.
$
Polytope membership returns ``YES" iff the nearest neighbor of $q$ is $p^*$. \qed
\end{proof}

\begin{figure}[t]\centering
\includegraphics[scale=0.60]{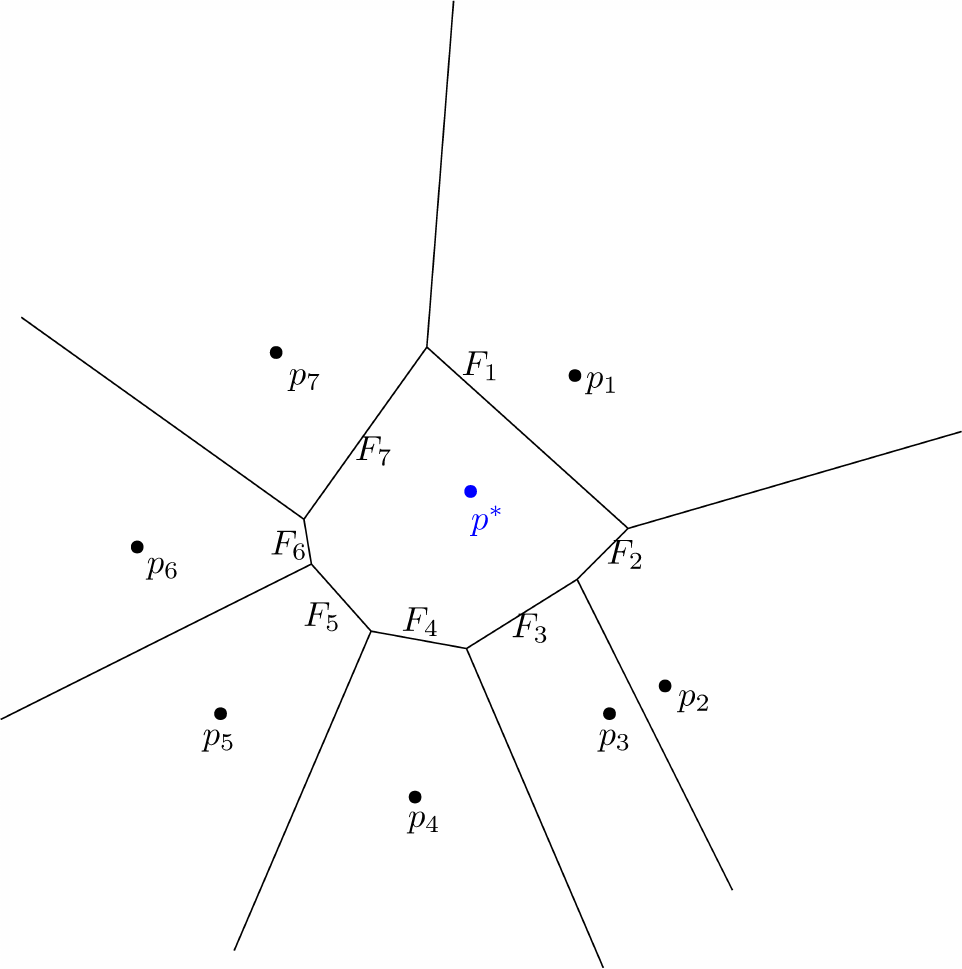}
\caption{A conceptual presentation of the constructive proof in the case of d=2. Each $p_i$ corresponds to the symmetric point of $p^*$ about the facet $F_i$. 
\label{fVoronoiExample}}
\end{figure}

\textit{Remark. } A nearest neighbor computation or data structure on these $n+1$ points of corollary~\ref{Cpoints} provides us with an exact Membership Oracle for the polytope $P$. We also emphasize that the choice of $p^* \in P$ is arbitrary. This means that a set $S$ satisfying the Corollary can be computed for each point $p^* \in P \setminus \partial P$.

\subsection{Approximate Polytope Membership Oracle}\label{sAPMO}

Let us consider the following relaxation.

\begin{dfn}[Approximate Polytope Membership Problem]\label{dfnAPMP}
Given a convex polytope $P\subset\RR^d$ and an approximation parameter $\e\in (0,1)$, an $\e$-approximate polytope membership query decides whether a query point $q\in\RR^d$ lies inside or outside of $P$, but may return either answer if $q$'s distance from the boundary of $P$ is at most $\e \cdot diam(P)$.
\end{dfn}

We define $\Pe = \{x \in P \mid d(x, \partial P) > \e \cdot diam(P) \}$.
 Obviously the aforementioned problem makes sense only when $\Pe \neq \emptyset$. Otherwise, we can always return ``NO'' for a query point $q$ and be correct.

\begin{theorem}[Approximate Membership Oracle (AMO)]\label{lemmaAPM}
Approximate Polytope Membership for an $H$-polytope $P$ and an approximation parameter $\e$, such that $\Pe \neq \emptyset$, reduces to the ANN problem on the pointset $S = \{p^*,\, p_i : 1\le i \le n\}$, where $p^* \in \Pe$ and the remaining $p_i$ are computed as in the proof of Corollary~\ref{Cpoints}.  
\end{theorem}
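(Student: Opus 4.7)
My plan is to give a concrete reduction that, given a query $q$, runs a $c$-approximate nearest neighbor query on $S$ for a suitable $c = c(\e) > 1$ and answers ``inside'' iff the oracle returns $p^*$. Correctness then reduces to comparing $\|q-p^*\|$ with $\min_i \|q-p_i\|$ on the two \emph{definitive} regions of Definition~\ref{dfnAPMP}: well inside ($q \in \Pe$) and well outside ($d(q,P) > \e\cdot diam(P)$); on the $\e$-band around $\partial P$ either answer is admissible, so nothing needs to be shown there.

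The main algebraic tool is an identity forced by $p_i$ being the reflection of $p^*$ across $H_i$: decomposing $q$ into its components along and perpendicular to $H_i$ gives
\[
\|q-p_i\|^2 - \|q-p^*\|^2 \;=\; 4\, d(p^*, H_i)\cdot \sigma_i(q),
\]
where $\sigma_i(q)$ is the signed distance from $q$ to $H_i$, positive on the side of $P$. A companion geometric fact, proved in one line from convexity, is that for any $x \in P$ the open ball $B(x, d(x,\partial P))$ lies inside every defining halfspace $a_i z \le b_i$, hence $d(x, H_i) \geq d(x,\partial P)$ for every $i$.

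Case inside: for $q \in \Pe$ both $d(q, H_i)$ and $d(p^*, H_i)$ exceed $\e\cdot diam(P)$ and $\sigma_i(q)>0$, so the identity gives $\|q-p_i\|^2 \geq \|q-p^*\|^2 + 4\e^2\, diam(P)^2$. Since $q,p^* \in P$ forces $\|q-p^*\| \leq diam(P)$, I conclude $\|q-p_i\| \geq \sqrt{1+4\e^2}\,\|q-p^*\|$ for every $i$, so any $c$-ANN with $c \leq \sqrt{1+4\e^2}$ is forced to return $p^*$. Case outside: for $q$ with $d(q,P) > \e\cdot diam(P)$, I set $q_P := \Pi_P(q)$ and identify an index $i^*$ --- from the facet supporting $P$ at $q_P$ in the generic case, or, when $q_P$ lies on a lower-dimensional face, from a decomposition of $q-q_P$ into outward normals of the active facets --- such that $d(q, H_{i^*})$ is proportional to $d(q,P)$. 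Now $\sigma_{i^*}(q)<0$, the identity reverses sign and yields a gap $\|q-p^*\|^2 - \|q-p_{i^*}\|^2 = \Omega(\e^2\, diam(P)^2)$ which, combined with a straightforward upper bound on $\|q-p_{i^*}\|$ in terms of $diam(P)$, prevents a $c$-ANN from returning $p^*$ at the same $c$. The final $c(\e)$ is the common value that simultaneously satisfies both sides.

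The part I expect to be the main obstacle is the outside case: the clean inequality $d(x,H_i) \geq d(x,\partial P)$ crucially uses convexity of $P$ and holds only for $x \in P$, so for $q$ outside $P$ there is no uniform lower bound on $d(q,H_i)$ over $i$. Producing a single ``sufficiently violated'' hyperplane $H_{i^*}$ with $d(q,H_{i^*}) = \Omega(\e\cdot diam(P))$, especially when $q_P$ lies on an edge or vertex of $P$ so that $q-q_P$ is a nontrivial positive combination of several active normals, is the delicate step and it is what ultimately pins down the quantitative relationship between the ANN factor $c$ and the membership accuracy $\e$.
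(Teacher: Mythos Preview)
Your core tool---the reflection identity $\|q-p_i\|^2-\|q-p^*\|^2=4\,d(p^*,H_i)\,\sigma_i(q)$---is exactly the paper's orthogonal decomposition along the line $p^*p_i$, written more compactly. The inside case is handled the same way; you even get the slightly sharper threshold $\sqrt{1+4\e^2}$ because you bound $\|q-p^*\|\le diam(P)$ directly, whereas the paper bounds the two legs $(r_i-a_i)^2$ and $k_i^2$ by $diam(P)^2$ each and ends with $\sqrt{1+2\e^2}$.

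The divergence is entirely in the outside case, and two remarks are in order.

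\emph{First}, the paper does not manufacture an index $i^*$ from $\Pi_P(q)$ and a normal-cone decomposition. It instead (i) pre-filters: any $q$ with $\|q-p^*\|\ge \Delta(P)/(2\e)$, where $\Delta(P)=\max_i\|p_i-p^*\|<2\,diam(P)$, is declared outside outright (valid because then $\|q-p^*\|>diam(P)$); and (ii) for the surviving $q$ it takes $p_i$ to be the \emph{exact} nearest neighbor of $q$ in $S$, which is some $p_i\neq p^*$ since $q\notin P$, and runs the same identity. The pre-filter is precisely what supplies the upper bound on $\|q-p_i\|$ (namely $\|q-p_i\|\le\|q-p^*\|<\Delta(P)/(2\e)$) that converts the additive gap into a ratio. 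Your ``straightforward upper bound on $\|q-p_{i^*}\|$ in terms of $diam(P)$'' is \emph{not} available without such a filter, since $q$ may lie arbitrarily far from $P$; this is a real hole in your outline. Because of the $1/\e$ in the pre-filter radius, the paper's outside-case ANN factor degrades by an extra $\e^2$, and the final $\e'$ is taken as the minimum over the two cases.

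\emph{Second}, the obstacle you single out---guaranteeing $|\sigma_i(q)|\ge \e\cdot diam(P)$ for the chosen index---is genuine, and the paper does not actually argue it either: it simply writes $a_i\ge \e\cdot diam(P)$ for the exact-NN index without justification. In fact that inequality can fail (e.g.\ $P=[-1,1]^2$, $p^*=0$, and $q$ just outside a corner: the exact NN is one of the two adjacent $p_i$, but $d(q,H_i)\approx d(q,\partial P)/\sqrt2$). So your caution is warranted; your projection/normal-cone route is a reasonable way to recover a bound (possibly with a dimension-dependent constant), but neither your sketch nor the paper's argument is complete on this specific point as written.
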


\begin{proof}
Let $p^* \in \Pe$ and $S$ be the corresponding pointset of Lemma~\ref{Cpoints} for $P$. Let $\Delta(P) = \max\limits_{p_i \in S \setminus \{p^*\}} ||p_i-p^*||_2 $. By construction, the following holds for $\Delta(P)$:
$
2\e \cdot diam(P) < \Delta(P) < 2 diam(P)
$.
Let $q \in \RR^d$ be a query point such that $||q - p^*|| < \frac{\Delta(P)}{2\e}$. For any other $q' \in \RR^d$, we return ``NO'', because 
$
||q' - p^*||_2 \geq \frac{\Delta(P)}{2\e} \Rightarrow ||q' - p^*|| > diam(P) \Rightarrow q' \notin P
$.
We distinguish two cases when $q \in \Pe$ and $q \in \{\RR^d \mid q \notin P \ \ \wedge \ \ d(q, \partial P) > \e \cdot diam(P)\}$.
 
\noindent -- Let $q \in \Pe$, we wish to select an $\e '$ for the ANN problem such that:
\begin{equation}
(1+\e ') < ||p_i - q||_2/||p^* - q||_2 \label{EconditionNN} 
\end{equation}
Essentially, this would imply that $p^*$ is the nearest neighbor of $q$, while every $p_i \in S \setminus \{p^*\}$ is not an $\e '$-NN of $q$.

Let $r_i = d(p^*, H_i) \geq \e \cdot diam(P)$, where $H_i$  is the hyperplane defining facet $F_i$. By construction, $d(p^*, H_i) = d(p_i, H_i)$. It follows that the segment $p^*p_i$ has length $2 r_i$, as it is perpendicular to $H_i$. 

\noindent Next, we define the projection of $q$ on the line spanned by the segment $p^*p_i$ as 
$ q_i = (p_i-p^*)\cdot q/||p_i-p^*||_2 $
and its distance from $H_i$ as $ a_i = d\left(q_i, H_i\right) \geq \e \cdot diam(P)$

\noindent Obviously now, as depicted in Fig.~\ref{figApmProof}:
\begin{align*}
||p_i - q_i||_2 &= r_i + a_i, \; ||p^* - q_i||_2 = r_i - a_i
\end{align*}
Therefore,
\begin{align*}
||p_i - q||_2^2 &= ||p_i - q_i||_2^2 + ||q - q_i||_2^2 = (r_i + a_i)^2 + k_i^2\\
||p^* - q||_2^2 &= ||p^* - q_i||_2^2 + ||q - q_i||_2^2 = (r_i - a_i)^2 + k_i^2,
\end{align*}
where $k_i = ||q - q_i||_2^2 < diam(P)$.
It follows that,
\begin{align*}
\frac{||p_i - q||_2^2}{||p^* - q||_2^2} = \frac{(r_i + a_i)^2 + k_i^2}{(r_i - a_i)^2 + k_i^2} = 1 + \frac{4r_i a_i}{(r_i-a_i)^2+k_i^2} \geq \\
\geq 1 + \frac{4\e^2 (diam(P))^2}{(r_i-a_i)^2+k_i^2} \geq 1 + \frac{4\e^2 (diam(P))^2}{2 (diam(P))^2} \geq 1 + 2\e^2
\end{align*}

Substituting in \eqref{EconditionNN}, yields:
$
(1+\e') < \sqrt{1+2\e^2}  \Rightarrow \e' < \sqrt{1+2\e^2} - 1
$.

\smallskip
\noindent -- Let $q \in \{\RR^d \mid q \notin P \ \ \wedge \ \ d(q, \partial P) > \e \cdot diam(P)\}$. 
Assume the nearest neighbor of $q$ is $p_i \in S \setminus \{p^*\}$. Similarly, we are looking for an $\e'$ such that:
$$ (1 + \e') < ||p^*-q||_2/||p_i-q||_2 $$
\noindent This means $p^*$ cannot be an ANN of $q$. Now, like before:
\begin{align*}
\frac{||p^*-q||^2_2}{||p_i-q||^2_2} = \frac{(r_i + a_i)^2 + k_i^2}{(r_i - a_i)^2 + k_i^2} = 1 + \frac{4r_i a_i}{(r_i-a_i)^2+k_i^2} \geq \\ 
\geq 1 + \frac{4(\e \cdot diam(P))^2}{(r_i-a_i)^2+k_i^2} \geq 1 + \frac{4(\e \cdot diam(P))^2}{2 \left(\frac{2 \Delta(P)}{2\e}\right)^2} \geq \\ \geq 1 + \frac{4 \e^4 \cdot diam^2(P)}{2 \Delta^2(P)} > 1 + \frac{4 \e^4 \cdot diam^2(P)}{4 \cdot diam(P)} > \\ > 1 +  e^4 \cdot diam(P)
\end{align*}
It follows that, $ \e' < \sqrt{ e^4 \cdot diam(P)} - 1 $.

\noindent Choosing $\e' = \min \{\sqrt{e^4 \cdot diam(P)} - 1, \sqrt{1+2\e^2} -1\}$ and answering $\e'$-ANN queries on this set solves the original problem, because if a query point $q \in \Pe$, then we have ensured that the $\e'$-ANN data structure will correctly identify $p^*$ as the only approximate nearest neighbor of $q$. Similarly in a symmetric argument, for every $q \notin P$, such that $d(q, \partial P) > \e \cdot diam(P)$, $p^*$ will not be an approximate nearest neighbor of $q$. Lastly, if $d(q, \partial P) \leq \e \cdot diam(P)$ the response from the ANN data structure does not matter. Therefore, the reduction is complete. \qed
\end{proof}

\begin{figure}[ht]\centering
\includegraphics[scale=0.81]{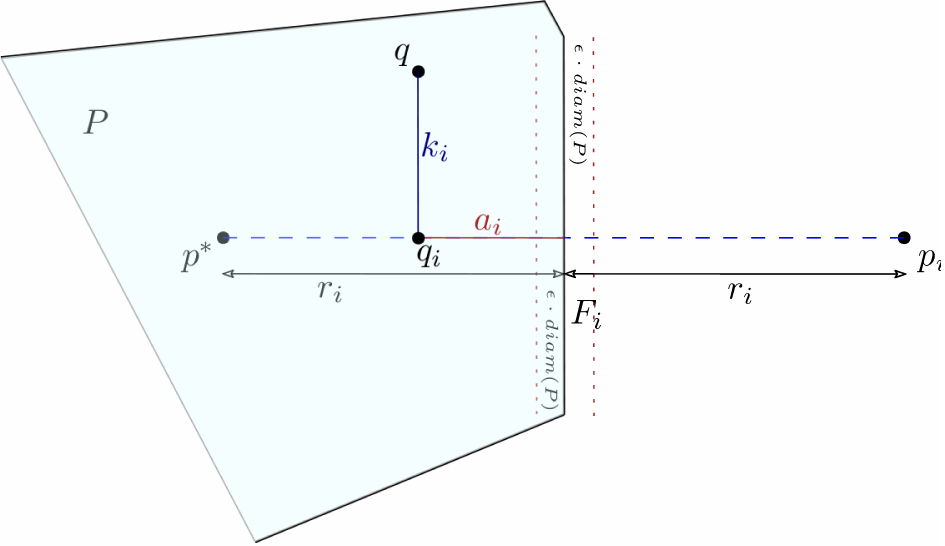}
\caption{$p_i$ corresponds to the symmetric point of $p^*$ about the facet $F_i$. We decompose the distances $||p^*-q||_2$ and $||p_i-q||_2$ and express them in terms of $a_i$ and $k_i$. 
Notice how $q \in \Pe \Rightarrow a_i \geq \e \cdot diam(P)$ and how $k_i < diam(P)$, as $q$ cannot be a vertex.
\label{figApmProof}}
\end{figure}

We now employ approaches for high-dimensional ANN to obtain a polynomial bound on the dimension by introducing a probability of success. Below, $\tilde{O}$ omits logarithmic factors.

\begin{theorem}\label{mainTheorem}[AMO in High Dimension]
For an H-polytope $P \subset \RR^d$ and an approximation parameter $\e$, such that $\Pe \neq \emptyset$, we can solve the Approximate Polytope membership problem on $P$ by building a data structure on $P$ answering queries in $\tilde{O}(d n^{\rho+o(1)})$ time and using $\tilde{O}(n^{1+\rho+o(1)} + d n)$ space, with a high probability of success, where $\rho = 1/(2 (1+\e')^2 -1)$ and $ \e' = \min \{\sqrt{e^4 \cdot diam(P)} - 1, \sqrt{1+2\e^2} -1\}$.
\end{theorem}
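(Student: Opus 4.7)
The plan is to combine the reduction already furnished by Theorem~\ref{lemmaAPM} with an off-the-shelf high-dimensional ANN data structure whose complexity is polynomial in $d$. Theorem~\ref{lemmaAPM} does all the geometric work: it converts the approximate membership question on $P$ into an $\e'$-ANN query on the pointset $S=\{p^*,p_1,\ldots,p_n\}\subset\RR^d$ with $\e'=\min\{\sqrt{e^4\cdot diam(P)}-1,\sqrt{1+2\e^2}-1\}$. So only two things remain: build $S$, and plug it into an ANN structure whose query time and storage match the claimed bounds.

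First I would construct $S$ explicitly. Once a deep interior point $p^*\in\Pe$ is fixed, each reflected point $p_i=f_i+(f_i-p^*)$ is obtained by projecting $p^*$ onto the affine hyperplane $a_i x=b_i$, which costs $O(d)$ arithmetic operations per facet. Hence $S$ is produced in $O(nd)$ time and stored in $O(nd)$ space; this accounts for the additive $dn$ term appearing in the space bound of the theorem.

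Second, I would invoke a state-of-the-art Euclidean LSH data structure, for instance that of Andoni--Razenshteyn, on $S$ with approximation factor $c=1+\e'$. Such a structure answers $c$-ANN queries in time $\tilde{O}(d\cdot |S|^{\rho+o(1)})$ using $\tilde{O}(|S|^{1+\rho+o(1)})$ additional space with high probability, where $\rho=1/(2c^2-1)$. Substituting $|S|=n+1$ and $c=1+\e'$ yields exactly the claimed exponent $\rho=1/(2(1+\e')^2-1)$ and the full query time and space bounds. A membership query on $q$ is then answered by running one $\e'$-ANN query on $S$ and declaring $q\in P$ iff the reported neighbor is $p^*$.

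Correctness is inherited from the case analysis in the proof of Theorem~\ref{lemmaAPM}: if $q\in\Pe$ the $\e'$-ANN structure is forced to return $p^*$; if $q\notin P$ and $d(q,\partial P)>\e\cdot diam(P)$ it cannot return $p^*$; on the remaining boundary strip of width $\e\cdot diam(P)$ either answer is legitimate. The ANN structure's success probability carries over verbatim as the success probability of the membership oracle. The main point that needs a sanity check, rather than a real obstacle, is purely arithmetic: one must verify that both branches of the $\min$ defining $\e'$ are strictly positive, so that $c>1$ and the LSH complexity is meaningful; this amounts to a mild normalization assumption on $diam(P)$ and $\e$, after which the theorem follows by direct composition.
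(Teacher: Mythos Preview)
Your proposal is correct and follows essentially the same approach as the paper: invoke Theorem~\ref{lemmaAPM} to reduce to $\e'$-ANN on the $(n{+}1)$-point set $S$, then plug in the Andoni--Razenshteyn LSH bounds with $c=1+\e'$. The one ingredient the paper adds that you leave implicit is an algorithmic choice of $p^*\in\Pe$: the paper takes $p^*$ to be the Chebyshev center of $P$ and gives a two-line contradiction argument that, whenever $\Pe\neq\emptyset$, the Chebyshev ball has radius $\geq\e\cdot diam(P)$ and hence its center lies in $\Pe$; you simply assume such a point is ``fixed.''
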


\begin{proof}
The Chebyshev center of a polytope $P$ is the center of the largest inscribed ball. Formally:
$ \arg \min\limits_{x \in P} \max\limits_{y \in P} ||x - y||_2^2 $.
Let $c$ be the Chebyshev center of $P$ with radius $r$ and assume $c \notin \Pe$, in order to deduce an absurdity. 
\begin{equation}\label{eRadiusCheb}
c \notin \Pe \Rightarrow r<\e \cdot diam(P)
\end{equation}
Take a point $c' \in \Pe$, as $\Pe \neq \emptyset$. 
\begin{equation}\label{eNewRadiusCheb}
d(c', F_i) \geq \e \cdot diam(P), \ \ 1 \leq i \leq n \Rightarrow B(c', \e \cdot diam(P)) \subset P
\end{equation}
Combining \eqref{eRadiusCheb} and \eqref{eNewRadiusCheb} produces an absurdity as we have found a larger inscribed ball in $P$, contradicting the property of $c$. Therefore, $c \in \Pe$.
We use $p^*=c$ as the starting point of the construction of the pointset $S$ in the proof of Theorem~\ref{lemmaAPM}. Answering ANN queries on $S$ using the LSH data structure of \cite{AndRaz15}, completes this proof. \qed
\end{proof}

\noindent \textit{Remark. } Any high-dimensional ANN solution can be utilized in the last step of Theorem~3 and we can inherit its complexity and its properties.

\section{Application to Polytope Boundary Problem}\label{Sbound}

The polytope boundary problem consists of creating a data structure for an H-polytope $P$ such that, given a query ray emanating from inside the polytope, we can efficiently compute the point $p=r \cap \partial P$. It is possible to achieve query time in $O(\log n)$ by using space in $O(n^d/\log^{\lfloor d/2 \rfloor}n)$~\cite{Ramos99}. The boundary oracle is dual to finding the extreme point in a given direction among a known pointset.
This is $\epsilon$-approximated through $\epsilon$-coresets for measuring extent, in particular (directional) width, but requires a subset of $O((1/ \epsilon)^{(d-1)/2})$ points~\cite{Agarwal05}.
The exponential dependence on $d$ or the linear dependence on $n$ make these methods of little practical use in high dimensions. 
Ray shooting has been studied in practice only in low dimensions, as well.

\begin{figure}\centering
\includegraphics[scale=0.25]{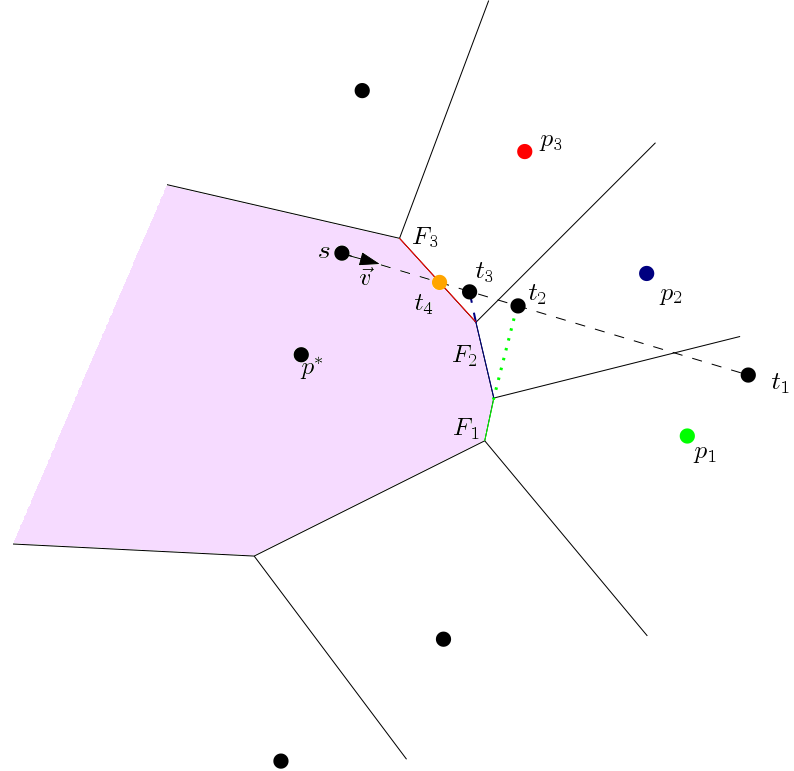}
\caption{An example of the boundary oracle converging to a solution. The query ray is $r=(s, \vec{v})$ and $t_4 = r \cap \partial P$ is the solution. $t_1, t_2, t_3, t_4$ were computed in sequence.} \label{figBoundProof}
\end{figure}

\paragraph{Exact Polytope Boundary Oracle.}
We now describe an iterative procedure for $P$ based on an exact nearest neighbor data structure \texttt{E\_MEM} defined on the pointset $S$ of Corollary~\ref{Cpoints} that we described in section~\ref{sAPMO}. This exact nearest neighbor data structure will act as the exact membership oracle for the polytope $P$.  We call this algorithm \texttt{BoundaryOracle}.

\textit{Finding the starting point.} 
The first step is to find a starting point $t_1$ such that $t_1\in r$ and $t_1\notin P$. 
We may use the intersection of $r$ with a bounding box around $P$.
A bounding box of $P$ can be readily computed by solving $2d$ linear programs to compute the farthest points on $P$ along the coordinate directions. 

\textit{Finding the intersection point.} We obtain an efficient method following a derivative-like approach.
Given starting point $t_1\notin P$: let $p_i$ be the nearest neighbor of $t_1$ using the data structure defined for membership: $p_i = \texttt{E\_MEM}(t_1)$.
Let $H_i$ be the hyperplane supporting the facet $F_i$ used to define $p_i$;
$F_i$ separates the cell of $p_i$ from $P$ in the Voronoi diagram. 
Let $t_2= (H_i \cap r)$.
Iterate by computing $t_3,t_4,\ldots$, until membership decides $t_n \in P$.

\counterwithout{lstlisting}{chapter}

\begin{lem}[Correctness of algorithm \texttt{BoundaryOracle}]\label{lemBoundaryOracle}
\texttt{BoundaryOracle} always converges to a solution for the boundary problem for a given polytope $P$.
\end{lem}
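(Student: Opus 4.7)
I would parametrize the query ray as $r(\lambda)=s+\lambda\vec v$, where $s\in P$ is the source and $\vec v$ the direction, and write $t_k=r(\mu_k)$ for the $k$-th iterate. Let $\lambda^{*}$ be defined by $r(\lambda^{*})=r\cap\partial P$, and for each supporting hyperplane $H_j$ of $P$ let $\lambda_j$ be the parameter at which the line $\{r(\lambda)\}$ crosses $H_j$, whenever such a crossing exists; since $s\in P$, one has $\lambda^{*}=\min\{\lambda_j:\lambda_j>0\}$. The strategy is to show that the sequence $(\mu_k)$ strictly decreases, is bounded below by $\lambda^{*}$, and, from $k\geq 2$ onwards, lives in the finite set $\{\lambda_1,\ldots,\lambda_n\}$. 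This forces $\mu_k=\lambda^{*}$ after at most $n$ iterations, at which point $t_k=r\cap\partial P$ and \texttt{E\_MEM} halts the procedure.

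The core of the argument is the inductive step. Suppose $t_k\notin P$, so $\mu_k>\lambda^{*}$, and let $p_i=\texttt{E\_MEM}(t_k)\neq p^{*}$. By the construction in Corollary~\ref{Cpoints}, $p_i$ is the reflection of $p^{*}$ about the supporting hyperplane $H_i$, so $H_i$ is the perpendicular bisector of the segment $p^{*}p_i$ and the Voronoi cell of $p_i$ is contained in the open halfspace bounded by $H_i$ that is disjoint from $P$. Since $s\in P$ and $t_k\in\mathrm{Vor}(p_i)$, the points $s$ and $t_k$ lie strictly on opposite sides of $H_i$, hence $r$ crosses $H_i$ at some $\lambda_i\in(0,\mu_k)$ and $t_{k+1}=r(\lambda_i)$ is well defined and strictly closer to $s$ along the ray than $t_k$. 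Moreover, the crossing at $\lambda_i$ sends the ray from the $P$-side to the non-$P$-side of $H_i$, so $\lambda_i$ belongs to the set over which $\lambda^{*}$ is minimized, giving $\lambda_i\geq\lambda^{*}$. Thus $\mu_{k+1}\in[\lambda^{*},\mu_k)$, and since $\mu_{k+1}\in\{\lambda_1,\ldots,\lambda_n\}$ the sequence is strictly decreasing inside a finite set and must reach $\lambda^{*}$ in at most $n$ iterations; at that moment $t_k\in\partial P\subset P$, \texttt{E\_MEM} certifies membership, and the algorithm halts with the correct output.

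The main obstacle is exactly the lower bound $\lambda_i\geq\lambda^{*}$: without it, an update could overshoot $\partial P$ into the interior of $P$, terminating the algorithm at an incorrect point. This is where the Voronoi/reflection structure of Corollary~\ref{Cpoints} is essential, since it guarantees that whenever \texttt{E\_MEM} reports some $p_i\neq p^{*}$ on an outside point, $H_i$ is a hyperplane through which the ray has already exited $P$, hence $\lambda_i$ is no smaller than the first-exit parameter $\lambda^{*}$. A small side issue is the tie-breaking of the exact nearest neighbor when a point lies on several cell boundaries simultaneously; this is handled by the standard convention that closed halfspace tests classify boundary points as lying in $P$, so the membership oracle always reports ``inside'' at $\partial P$ and the algorithm terminates.
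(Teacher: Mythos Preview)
Your proof is correct and follows essentially the same line as the paper's: establish that each iterate $t_{k+1}$ lies strictly between $t_k$ and $\partial P$ on the ray (monotone progress without overshoot), and then conclude finite termination from the discrete structure. The only cosmetic difference is in the finiteness bookkeeping---the paper argues that the sequence of nearest neighbors (Voronoi cells) never repeats, whereas you observe that for $k\geq 2$ the parameters $\mu_k$ are strictly decreasing elements of the finite set $\{\lambda_1,\ldots,\lambda_n\}\cap[\lambda^{*},\infty)$; your formulation is slightly more explicit, but the underlying argument is the same.
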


\noindent The proof of this lemma is presented in the Appendix~\ref{appProofs}.

\paragraph{Approximate Polytope Boundary Oracle.}
Now, we define an approximate version of the polytope boundary problem.
\begin{dfn}[Approximate Polytope Boundary Problem]\label{dfnAPBO}
Given a convex H-polytope $P \subset \mathbb{R}^d$ and an approximation 
parameter $\e\in (0,1)$, preprocess $P$ into a data structure such that, given 
a query ray $r \subset \mathbb{R}^d$ emanating from inside $P$, it is possible to efficiently compute a point
$r^* \in r$ such that $d(r^*, \partial P) \leq \e \cdot diam(P)$.
\end{dfn}

We make two additional changes to the algorithm presented in the previous section. First, we compare $t_i$'s and $t_{i+1}$'s distance from the ray's source point $s$. If the distance is not improved, then we discard the current $t_{i+1}$ and set it as $t_{i+1} = (t_{i} - s) - \frac{v}{||v||_2}\e$. In other words, in this case we take an $\e$-step from $t_{i}$ towards the ray's apex.
The second change concerns termination. Now we stop when the approximate membership oracle identifies a point $t_i$ as being inside the polytope, or when the point $t_i$ lies in the opposite direction of the ray.

\begin{lstlisting}[caption={Approximate Boundary Oracle},label={aAppxBoundary}]
Input: $H$-polytope $P \subset \RR^d$, ray $r$ (pair $(s, v)$), $\e$
Output: $t \in \RR^d$ s.t. $t \in r$ and $d(t, \partial P ) \leq \e diam(P)$

A_MEM = approximate membership oracle for $P$
$Q$ = bounding_box($P$)
$t = Q \cap r$;
do
  $p_i$ = A_MEM(t);
  if $p_i$==$p∗$ then return $t + \frac{v}{||v||_2} \e$; end
  $t_{prev} = t$
  $H = H_i$ //facet corresponding to pi
  $t = H \cap r$
  if $||t - s||_2 \geq ||t_{prev} - s||_2$ then $t = (t_{prev} - s) - \frac{v}{||v||_2} \epsilon$; end
  if $(t - s) \cdot v < 0$ then return $s + \frac{v}{||v||_2} \e$; end
while True;
\end{lstlisting}
%


\begin{lem}[Correctness of Algorithm~\ref{aAppxBoundary}]\label{lemAppxBoundaryOracle}Algorithm~\ref{aAppxBoundary} always converges to a solution for the approximate boundary problem.
\end{lem}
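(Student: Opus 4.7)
The plan is to split the argument into (i) termination of the loop in finitely many iterations and (ii) the approximation guarantee for the returned point, leveraging Lemma~\ref{lemBoundaryOracle} for the exact case and the two algorithmic modifications (the $\e$-safeguard step and the weaker stopping condition) to handle the approximate regime.

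For termination, I would show that each iteration of the do-while loop decreases $\|t-s\|_2$ by at least $\Omega(\e)$. Lemma~\ref{lemBoundaryOracle} already guarantees that the natural update $t \leftarrow H \cap r$ produces a strictly decreasing sequence when exact membership is used. In the approximate setting, \texttt{A\_MEM} may misclassify $t$ near $\partial P$ and the new intersection might fail to decrease $\|t-s\|_2$; the explicit safeguard — replacing $t$ by $t_{prev} - (v/\|v\|_2)\e$ whenever the distance to $s$ does not improve — forces an $\e$-decrease unconditionally. Since the initial $t = Q \cap r$ is at distance at most the diameter of the bounding box from $s$, only $O(\mathrm{diam}(P)/\e)$ iterations can occur before either the ``$p_i = p^*$'' or the ``$(t-s) \cdot v < 0$'' exit triggers.

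For approximation quality I would analyze the two exit branches separately. When \texttt{A\_MEM} returns $p^*$, the current $t$ was produced (after the first iteration) as $H \cap r$, where $H$ is a supporting hyperplane of some facet $F$ of $P$. Either $t$ is truly interior to $P$, and then $t \in H \cap P = F \subset \partial P$, giving $d(t, \partial P) = 0$; or \texttt{A\_MEM} is fooled, which by Definition~\ref{dfnAPMP} forces $d(t, \partial P) \leq \e \cdot \mathrm{diam}(P)$. Either way, the returned point $t + (v/\|v\|_2)\e$ lies within $O(\e \cdot \mathrm{diam}(P))$ of $\partial P$. In the other branch $(t - s)\cdot v < 0$, the $\e$-safeguard has just carried us past $s$; because every previous $t_i$ was classified outside $P$ while $s$ is inside, the true boundary crossing along $r$ must lie within an $\e$-segment of $s$, and returning $s + (v/\|v\|_2)\e$ stays within $O(\e \cdot \mathrm{diam}(P))$ of $\partial P$, as required by Definition~\ref{dfnAPBO}.

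The main difficulty I anticipate is making rigorous the interaction between the oracle's misclassifications and the $\e$-safeguard: in principle, \texttt{A\_MEM} could oscillate its answer for successive queries lying in the slab of width $\e\cdot\mathrm{diam}(P)$ around $\partial P$, potentially producing unproductive iterations. The safeguard step neutralizes this threat because it guarantees progress independently of \texttt{A\_MEM}'s output, so the convergence analysis reduces to a clean potential argument on $\|t - s\|_2$; a rescaling of $\e$ by a constant factor, absorbing the final $\e$-step applied to the output, will then match the bound mandated by Definition~\ref{dfnAPBO}.
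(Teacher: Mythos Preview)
Your termination argument (a potential on $\|t-s\|_2$ that drops by at least $\e$ per iteration because of the safeguard step) is exactly what the paper does; note, though, that both you and the paper tacitly assume every iteration decreases $\|t-s\|$ by $\e$, whereas a natural step $t\leftarrow H\cap r$ that \emph{does} improve (so the safeguard is not triggered) can in principle improve by less than~$\e$---this is a shared looseness, not a flaw specific to your write-up. For the approximation guarantee you split on the two exit branches (\texttt{A\_MEM} returns $p^*$ versus $(t-s)\cdot v<0$), while the paper instead splits on whether the apex $s$ lies deeper than $\e\cdot\mathrm{diam}(P)+\e$ inside $P$. Your decomposition is arguably more natural because it tracks the algorithm's control flow directly; the paper's decomposition is geometric and explains \emph{why} one or the other exit is reached. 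Both routes arrive at the same bound after absorbing the extra additive $\e$ into the constant.

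There is one concrete error in your argument. In the ``\texttt{A\_MEM} returns $p^*$'' branch you assert that the current $t$ was produced as $H\cap r$ for some supporting hyperplane $H$, and then conclude that if $t\in P$ it must lie on the facet $F=H\cap P$. But $t$ may equally well have been produced by the safeguard step $t=t_{\mathrm{prev}}-(v/\|v\|)\e$, in which case it lies on no supporting hyperplane and can be strictly interior. The fix is easy: in that case $t_{\mathrm{prev}}$ was classified as outside in the previous iteration, so either $t_{\mathrm{prev}}\notin P$ (and $\partial P\cap r$ lies between $t$ and $t_{\mathrm{prev}}$, giving $d(t,\partial P)\le\e$) or \texttt{A\_MEM} was fooled at $t_{\mathrm{prev}}$ (forcing $d(t_{\mathrm{prev}},\partial P)\le\e\cdot\mathrm{diam}(P)$, hence $d(t,\partial P)\le\e\cdot\mathrm{diam}(P)+\e$). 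Either way the returned point is within $O(\e\cdot\mathrm{diam}(P))$ of $\partial P$, and your rescaling remark at the end handles the constant. Once you patch this case your proof goes through along the lines you sketched.
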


\noindent We present the proof of this lemma in the Appendix~\ref{appProofs}.

\section{Implementation and Experiments}\label{Simplement}

\noindent \textit{Implementation.} All of our code\footnote{\url{https://github.com/van51/volume_approximation}} is linked to the software of~\cite{EmiFis14}. It is written in C++11 based on using the CGAL\footnote{\url{http://www.cgal.org/}} library for the readily available data structures of d-dimensional objects, Eigen3 for some linear algebra computations and FALCONN\cite{And15} for the approximate nearest neighbor data structure. We remind the reader at this point that for a polytope $P(d,n,i)$ we compute $n+1$ points, out of which one point $p^* \in P$ while all remaining $n$ points $p_i \notin P, 1 \leq i \leq n$.
FALCONN offers LSH only for angular distances so in order to take advantage of that we use it in the following manner. We consider our pointset already centered around the internal point, in our case the origin. We build a FALCONN data structure using the Hyperplane LSH family and setting $k=11, l=1$, number of probes=$40$, when the number of facets $n \geq 10000$. Otherwise, we set them to $l=1$, $k=8$ and number of probes=$150$. $l$ corresponds to the number of hash tables built, $k$ corresponds to the number of hash functions used per hash table and number of probes is a parameter for the multi-probe LSH scheme~\cite{LV07}. The data structure is built for every computed point besides the internal one. Then, assuming that for a query $q$ FALCONN returns an approximate nearest neighbor guess $x_i$, we compare $d(x_i, q)$ to $d(p^*, q)$ and return the point closest to $q$ out of $x_i, p^*$. The parameters for FALCONN were selected manually, while trying to maintain a 90\% success rate for membership.  

\textit{Datasets. } We experiment on a synthetic dataset consisting of high-dimensional polytopes with a large number of facets. In particular,
for the following set of possible dimensions $\boldsymbol{d} = \{ 40, 100, 500, 1000 \}$ and the following set of possible number of facets $\boldsymbol{n} = \{5000, 10000, 20000,$ $ 50000, 100000, 500000, 1000000\}$, we generate $5$ polytopes for every combination of $\boldsymbol{d} \times \boldsymbol{n}$. Each polytope $P(d,n,i), d \in \boldsymbol {d}, n \in \boldsymbol{n}, i \in \{1,2,3,4,5\}$ lives in a $d$-dimensional Euclidean space and is described by $n$ inequalities of the form:
$ a_j x \leq 1000, 1 \leq j \leq n, $ where $a_j \sim mod(U(0, 32767), 1000)$. The notation $U(i,j)$ denotes the uniform real distribution over $[i, j]$. By construction, each polytope contains the origin $0$, which we use as the internal point needed by the approximate membership oracle. If that assumption was not satisfied, we could have computed an internal point either by solving a linear program or by computing an important point of the polytope, like the Chebyshev center. 

\textit{Evaluation protocol.} For both oracles we report pre-processing time, total query time, and success rate vs $n$ and $d$ as $n$ and $d$ vary in their respective sets $\boldsymbol{n},\boldsymbol{d}$. Specifically for the boundary oracle we also report the average number of steps that it required in order to reach a solution and we also compute the min,max and average distances of the point returned from our approximate boundary oracle to the actual point that the exact ray shooting problem should have computed. We compare the query time to the naive approach of checking all $n$ facets of $P$. For the membership oracle we sample $1000$ query points inside the polytope via the popular hit-and-run paradigm and then move these points sufficiently far from the origin so that they lie outside the polytope. This generates another $1000$ points to form a total of $2000$ points. Similarly for the boundary oracle we use $1000$ query points in total.

\textit{Results.} Table~\ref{tablePreprocessing} depicts the total time in seconds for creating the approximate membership oracle on random polytopes for different values of $d, n$.  
Figure~\ref{fmr} depicts total time in seconds for all queries to be completed. Parameters were tuned such that the membership oracle achieved an accuracy of $>90\%$, i.e. at least $9$ out of $10$ queries succeed on average. 
The results matched our expectations with regards to the behaviour of the oracles in high dimension, where we can see a huge difference in the query time, especially as the number of facets grows larger as well.

\begin{table}
\centering
\caption{Preprocessing time in seconds for membership oracle. This includes computing the $n+1$ pointset and creating the ANN data structure on top of it.}
\begin{tabular}{cc|c|c|c|c|c|c|c|}
\cline{3-9}
                                                          &               & \multicolumn{7}{c|}{\textbf{Number of facets}}                                                                          \\ \cline{3-9} 
                                                          &               & \textbf{5000} & \textbf{10000} & \textbf{20000} & \textbf{50000} & \textbf{100000} & \textbf{500000} & \textbf{1000000} \\ \hline
\multicolumn{1}{|c|}{\multirow{4}{*}{\textbf{Dimension}}} & \textbf{40}   & 0.006s        & 0.013s         & 0.027s         & 0.057s         & 0.125s          & 0.518s          & 0.795s           \\ \cline{2-9} 
\multicolumn{1}{|c|}{}                                    & \textbf{100}  & 0.015s        & 0.035s         & 0.057s         & 0.121s         & 0.230s          & 1.005s          & 1.885s           \\ \cline{2-9} 
\multicolumn{1}{|c|}{}                                    & \textbf{500}  & 0.055s        & 0.108s         & 0.193s         & 0.419s         & 0.717s          & 3.396s          & 6.744s           \\ \cline{2-9} 
\multicolumn{1}{|c|}{}                                    & \textbf{1000} & 0.101s        & 0.192s         & 0.342s         & 0.783s         & 1.470s          & 5.500s          & 10.770s          \\ \hline
\end{tabular}

\label{tablePreprocessing}
\end{table}

\begin{figure}\centering
\includegraphics[scale=0.23]{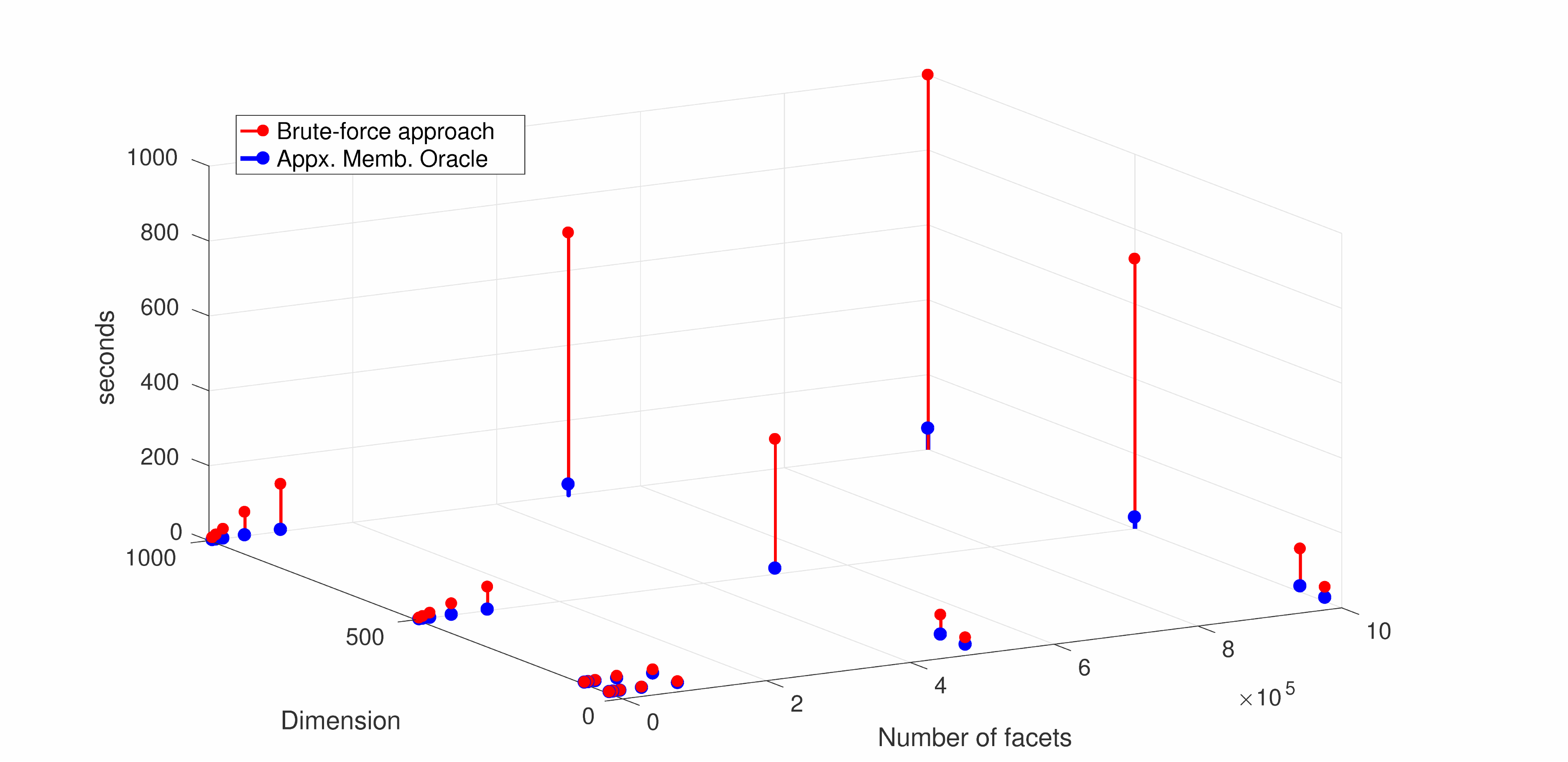}
\caption{Average timing results for 2000 queries for varying $n$ and $d$. Half of the queries were inside the random polytopes and half were outside.}
\label{fmr}
\end{figure}


\bibliographystyle{splncs}
\bibliography{biblio}
\appendix
\section{Appendix}

\subsection{Proofs of section~\ref{Sbound}}\label{appProofs}
Proof of lemma~\ref{lemBoundaryOracle}.

\begin{proof}
Let $t_1, t_2, \ldots $ denote the sequence of successive points computed on the ray $r$ by the above algorithm. Let $x_1, x_2, \ldots $ be a sequence of points in $S$, each representing the nearest neighbor of the point $t_i$. This means that the nearest neighbor of $t_i$ was $x_i$ at the $i$-th step. We assume without loss of generality that each $t_i$ has a single nearest neighbor, because otherwise it would mean that $t_i$ falls on the intersection of a line (the ray), a Voronoi facet and a supporting hyperplane which is highly degenerate. However, even in that case we could consider every nearest neighbor of the point and take the one that improves the distance the most.
For correctness, assume that we have reached the $i$-th step. There are two cases for $t_{i+1}$. Either it lies on $\partial P$ in which case the membership data structure $\mathtt{E\_MEM}$ will identify it as being inside and the algorithm will terminate. Otherwise, by convexity of the cell of $x_i$, $t_{i+1}$ lies between $\partial P$ and $t_{i}$, since $t_{i+1}$ lies on an ``extension'' of the facet (meaning on $H_i \setminus F_i$) between the cell of $x_i$ and $P$. Since $H_i \setminus F_i$ cannot belong to a Voronoi facet, $t_{i+1}$ will always belong to a new Voronoi cell. Therefore the sequence $x_i$ will not have any repeating points and the algorithm will eventually reach $\partial P$ where the iteration will stop and return $\partial P \cap r$.
\end{proof}

\noindent Proof of lemma~\ref{lemAppxBoundaryOracle}.
\begin{proof}
Observe that the successive points $t_i$ lying on the ray $r$ are always improving the distance to the ray's apex, by a factor of at least $\e$. Additionally, by definition, the ray's apex always lies inside $P$. We separate two cases for the ray's apex, which we will from now on denote as $s$. 
\begin{enumerate}
\item $d(s, \partial P) > \e \cdot diam(P) + \e$
\item $d(s, \partial P) \leq \e \cdot diam(P) + \e$
\end{enumerate}
In case 1, the algorithm will eventually reach a point $t_i$, after performing a number of $\e$-steps, such that $t_i \in P$ and $d(t_i, \partial P) \geq \e \cdot diam(P)$. Since the ray's apex $s$ is at distance $> \e \cdot diam(P) + \e$ from $\partial P$ this will happen while $(t_i - s) \cdot v > 0 $. In this case we return point $t_i + \frac{v}{||v||_2}\e$ which lies within distance $\e \cdot diam(P)$ from $\partial P$.

In case 2, the point $t_i$ will either reach $d(t_i, \partial P) > \e \cdot diam(P)$ and will be identified as being inside and in which case the algorithm will correctly return point $t_i + \frac{v}{||v||_2}\e$. Alternatively, it will take an $\e$-step and move to the opposite direction of the ray. In that case, $s$ is identified as lying at distance at most $\e diam(P) + \e$ from $\partial P$ and in which case we return point $s + \frac{v}{||v||_2}\e$ which lies in $r$ at distance $< \e \cdot diam(P)$ from $\partial P$.

Eventually, the algorithm returns point $t$: $t \in P$ and $d(t, \partial P) \geq \epsilon \cdot diam(P)$.
\end{proof}

\end{document}